\renewcommand{\Pr}{{\mathbb{P}}}
\newcounter{algsubstate}
\renewcommand{\thealgsubstate}{\alph{algsubstate}}
\newtheorem{theorem}{Theorem}
\newcommand{\calG}{\ensuremath{\mathcal{G}}}
\newcommand{\bone}{\mathbf{1}}
\newcommand{\bsigma}{\bm{\sigma}}
\newcommand{\bx}{\mathbf{x}}
\newcommand{\bw}{\mathbf{w}}
\newcommand{\cX}{\mathcal{X}}
\newcommand{\cU}{\mathcal{U}}
\newcommand{\by}{\mathbf{y}}
\newcommand{\bX}{\mathbf{X}}
\newcommand{\bE}{\mathbb{E}}
\newcommand{\beq}{\begin{equation}}
\newcommand{\eeq}{\end{equation}}
\newcommand{\blind}{0}
\date{}
\begin{document}
\if0\blind
{
  \title{\bf Fast Conservative Monte Carlo Confidence Sets}
  \author{Amanda K.\ Glazer\thanks{
    The authors gratefully acknowledge NSF Grants DGE 2243822 and SaTC 2228884.}\hspace{.2cm}\\
    Department of Statistics, University of California\\
    and \\
    Philip B.\ Stark \\
    Department of Statistics, University of California}
  \maketitle
} \fi

\if1\blind
{
  \bigskip
  \bigskip
  \bigskip
  \begin{center}
    {\LARGE\bf Title}
\end{center}
  \medskip
} \fi

\bigskip

\newpage

\begin{abstract}
    Extant ``fast'' algorithms for Monte Carlo confidence sets are limited to univariate shift parameters
    for the one-sample and two-sample problems using the sample mean as the test statistic; moreover, some do not converge reliably and most do not produce conservative confidence sets.
    We outline general methods for constructing confidence sets for real-valued and multidimensional parameters 
    by inverting Monte Carlo tests using any test statistic and a broad range of randomization schemes.
    The method exploits two facts that, to our knowledge, had not been combined: (i)~there are Monte Carlo tests that are conservative despite relying on simulation, and (ii)~since the coverage probability of confidence sets depends only on the significance level of the test of the \emph{true}
    null, every null can be tested using the same
    Monte Carlo sample. 
    The Monte Carlo sample can be arbitrarily small, although the highest nontrivial attainable confidence level generally increases as the number $N$ of Monte Carlo replicates increases.
    We present open-source Python and R implementations of new algorithms to compute conservative confidence sets for real-valued parameters from Monte Carlo tests, for test statistics and randomization schemes that yield $P$-values that are monotone or weakly unimodal in the parameter, with the data and Monte Carlo sample held fixed.
    In this case, the new method finds conservative confidence sets for real-valued parameters in $O(n)$ time, where
    $n$ is the number of data.
    The values of some test statistics for different simulations and parameter values have a simple relationship that makes more savings possible.
\end{abstract}

\noindent%
{\it Keywords:}  permutation, randomization, simulation, one-sample problem, two-sample problem,
exact test, conservative test, Monte Carlo
\vfill

\section{Introduction}

While it is widely thought that tests based on Monte Carlo methods are approximate, it has been known for almost 90~years 
that Monte Carlo methods
can be used to construct conservative tests.
In particular, Monte Carlo $P$-values can be defined in a conservative way for:\footnote{%
This terminology is not universal and usage has changed over time; see, e.g., \citet{hemerik2024}.
}
\begin{itemize}
    \item \emph{simulation tests}, which sample from the (known) null distribution or from a `proposal distribution' with a known relationship to the null distribution
\citep{barnard1963,birnbaum1974, bolviken1996,davison1997, foutz1980, harrison12}

    \item \emph{random permutation tests}, which sample from the orbit of the observed data under the action of a group of transformations under which the null distribution is invariant
    \citep{davison1997, dwass1957, hemerik2021, pitman1937, ramdas2023}
    
    \item \emph{randomization tests}, which sample 
    re-randomizations of the observed data.
    \citep{hemerik2021, kempthorne1969, phipson2010, ramdas2023, zhang2023}
\end{itemize}
Simulation tests require the null distribution to be known; they sample from that distribution.
Random permutation tests require that the null satisfy a group
invariance; they sample from the null conditional on the orbit of the observed data under that group or a subgroup.
Randomization tests condition on the set of observed values and require that the data arise from (or be modeled as arising from) randomizing units into notional `treatments' following a known design.
Some tests can be thought of as either random permutation tests or randomization tests, for instance, the test based on the difference in sample means in the standard two-sample problem.

These tests involve auxiliary randomness in addition to the data: the randomness of the Monte Carlo 
method, which we denote generically by
$U$, a random variable with a known distribution on 
the measurable set $\mathcal{U}$.
It has been noted---but is not widely recognized---that to construct confidence sets by inverting hypothesis tests,\footnote{%
    See Theorem~\ref{thm:duality}.
}
the same Monte Carlo
simulations (the same realization of $U$) can be used to test every null \citep{harrison12}.
That is, suppose the data $X \in \mathcal{X}$ come from some 
unknown distribution $\Pr_\theta$ in the family of probability distributions (hypotheses) 
$$
\mathcal{P} := \{\Pr_\eta: \eta \in \Theta\}.
$$
We assume that $X$ and $U$ have a joint distribution;
$\Pr_{\eta, U}$ denotes that joint distribution when $X \sim \Pr_\eta$.
Let $P_\eta: \mathcal{X} \times \mathcal{U} \rightarrow [0, 1]$ for each $\eta \in \Theta$.
Suppose that 
for each $\eta \in \Theta$, 
$$
\Pr_{\eta, U} \{ P_\eta(X, U) \le p \} \le p \;\; \mbox{ for all } \;\; p \in [0, 1],
$$
i.e., that $\{ P_\eta: \eta \in \Theta\}$ is a family of conservative randomized $P$-values for the hypotheses $\theta = \eta$, $\eta \in \Theta$.\footnote{%
    The $P$-value $P_\eta(X, U)$
    is \emph{conservative} if
    $\Pr_{\eta, U} \{ P_\eta(X, U) \le p \} \le p$ for all $p \in [0, 1]$.
}
Then 
$$ S(X, U) := \{ \eta \in \Theta : P_\eta(X, U) \ge \alpha \}
$$ 
is a $1-\alpha$ confidence set for $\theta$: 
for every $\eta \in \Theta$,
$P$-value $P_\eta$ is calculated using the same value of $X$ and the same value of $U$.

Putting these two ideas together gives a computationally efficient, easily understood procedure for
constructing conservative confidence sets for univariate and multivariate parameters.\footnote{%
    A nominal confidence level $1-\alpha$ confidence procedure is \emph{conservative} if the chance it produces a set that contains the true parameter value is at least $1-\alpha$.
}
Suppose that $\Theta$ is a (possibly infinite) interval of real numbers.
If $P_\eta(x, u)$, viewed as a function of $\eta$ (with $x$ and $u$ fixed), is monotone, a one-sided confidence bound for $\theta$ can be found efficiently using a modified bisection
search.
If $P_\eta(x, u)$ is quasiconcave in $\eta$, a confidence interval for $\theta$ can be found efficiently 
using two modified bisection searches.
(We discuss computational strategies for non-quasiconcave $P$-values and for multivariate parameters in section~\ref{sec:confsets} below.)

Surprisingly, many texts that focus on permutation tests do not mention that some random permutation tests are conservative.
Instead they treat random permutation $P$-values as an approximation to the non-randomized ``full group'' (or ``all possible assignments'') $P$-value 
corresponding to the entire orbit of the observed data under the group \citep{good2006, higgins2004, pesarin2010}.
That point of view has led to 
methods for inverting permutation tests
that are computationally inefficient and approximate 
\citep{bardelli2016, garthwaite1996, garthwaite2009, pagano1983, tritchler1984}
even though there are more efficient, provably
conservative methods, as discussed below.

Consider the two-sample shift problem \citep{lehmann2005}, in which a fixed set of units are randomly assigned to control or treatment by simple random sampling.
Under the null hypothesis that 
$\theta = \eta \in \Re$, if a unit is assigned to treatment, its response differs by $\eta$ from the response it would have had if it had been assigned to control.\footnote{%
    \citet{caughey2017} show that tests for a shift that is assumed to be equal for all units have an interpretation that does not require that assumption: they are tests for the maximum or minimum shift for all units.
    Moreover, calculations involving constant shifts can be used to make confidence bounds for percentiles of a shift that may differ across units.
}
Consider as a test statistic the mean response of the treatment group minus the mean response of the control group (or the absolute value of that difference, for a two-sided test).
Many published numerical methods for finding confidence sets for $\theta$ from random permutation tests assume that the $P$-value is a continuous function of the hypothesized shift $\eta$ and crosses $\alpha$ at exactly one value of $\eta$ (for one-sided intervals) or
exactly two values (for two-sided intervals) \citep{bardelli2016, garthwaite1996, garthwaite2009, pagano1983, tritchler1984}.
But in fact, the $P$-value is a step function of $\eta$, as illustrated in Figure~\ref{fig:two-sample-p-shift}.

\begin{figure}[!ht]
    \centering
    \includegraphics[scale = 0.45]{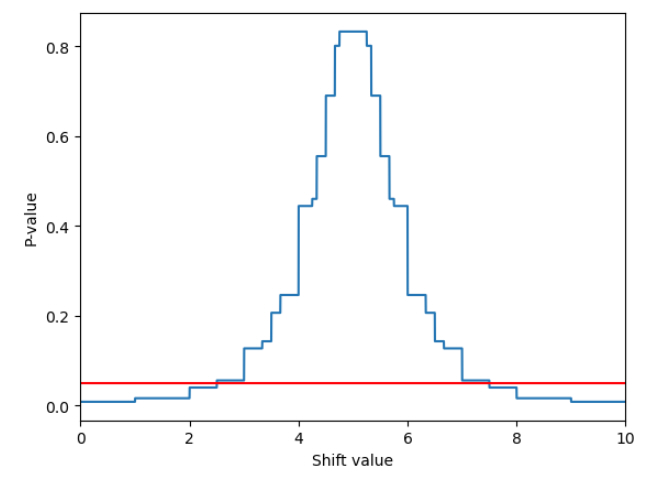}
    \caption{Exemplar (full-group) permutation $P$-value for the two-sided, two-sample shift problem as a function of the hypothesized shift $\eta$,
    using the absolute value of the difference in sample means as the test statistic.
    The two samples are $\{5, 6, 7, 8, 9\}$ and $\{0, 1, 2, 3, 4\}$.
    The $P$-value is piecewise constant, discontinuous, and
    quasiconcave.
    Many algorithms for inverting permutation tests to form confidence intervals incorrectly assume that the $P$-value is continuous in $\eta$ and equal to $\alpha$ at exactly two values of $\eta$.
    In general, the two-sided $P$-value for the two-sample shift problem using the difference in sample means (i)~is equal to $\alpha$ on two intervals, (ii)~is equal to $\alpha$ on one interval and jumps through $\alpha$ at a discontinuity, or (iii)~jumps through $\alpha$ at two discontinuities.
    The red horizontal line intersects the plot of the $P$-value as a function of the shift at values where the $P$-value equals 0.05:
    the $P$-value jumps through 0.05 at two discontinuities.}
    \label{fig:two-sample-p-shift}
\end{figure}

For some test statistics, the values of the test statistic
for different permutations of the data and for different values of $\eta$ have a simple relationship that makes even more computational savings possible by obviating the need to compute the test statistic from scratch for each Monte Carlo iteration and for each $\eta \in \Theta$.

This paper presents 
a general approach to finding conservative Monte Carlo
confidence sets using any test statistic and a broad range of randomization schemes (any for which there exists an exact test), flexibility that has not been addressed in the literature to our knowledge.
It also presents new algorithms 
and open-source Python and R implementations to compute confidence sets for scalar parameters from conservative Monte Carlo tests
when the $P$-value (for fixed data and a fixed set of Monte Carlo samples) is a monotone or weakly unimodal function of the parameter.
The new algorithms are more efficient than previous methods for computing approximate or conservative confidence sets from 
Monte Carlo tests
for the univariate one-sample and two-sample problems.

This paper is organized as follows.
Section~\ref{sec:extant-methods} reviews other ``fast'' methods for permutation confidence sets.
Section~\ref{sec:randomized-tests} gives an overview of randomized tests, $P$-values, and the duality between tests and confidence sets, highlighting
the fact that \emph{a confidence set 
derived by inverting hypothesis tests is conservative if and only if the test of the true null is conservative}.
Section~\ref{sec:permute} reviews some Monte Carlo tests that are conservative despite relying on simulation.
The general approach to computing confidence sets developed below can be used with any of them.
Section~\ref{sec:invert} discusses strategies to invert these tests to construct confidence sets and section~\ref{sec:confsets} presents the new algorithms to construct confidence sets for real-valued parameters when the $P$-value is monotone (for one-sided confidence bounds) or quasiconcave (for two-sided confidence intervals) in the hypothesized value of the parameter, using bisection  with a slight modification to make it conservative.
Section~\ref{sec:oneTwoSample} improves that result for two problems with additional structure: finding a confidence interval for the shift in one-sample and two-sample problems with real-valued data when the
test statistic is the sample mean (for the one-sample problem) or difference in sample means (for the two-sample problem).
Section~\ref{sec:numerical-comparisons} compares run times and confidence bounds for the new approach with those of
some other methods for the univariate one-sample and two-sample problems.
Section~\ref{sec:discuss} discusses extensions (including multi-dimensional confidence sets) and provides additional 
context.

\subsection{Real-valued parameters: extant methods}
\label{sec:extant-methods}
Many methods for Monte Carlo confidence sets seek to calculate or approximate full-group (or all-possible-assignments) $P$-values $P_\eta(X)$ (no dependence on $U$) for a collection of values of $\eta \in \Theta \subset \Re$.
Approximation methods (e.g., \citet{garthwaite1996, bardelli2016}) are not necessarily conservative.
Moreover, convergence guarantees for those algorithms require the $P$-value
to be continuous in $\eta$ and to cross $\alpha$ at two points \citep{tritchler1984, garthwaite1996, garthwaite2009, bardelli2016},
while in practice the $P$-value is typically a step function
(see Figure~\ref{fig:two-sample-p-shift}).

\citet{bardelli2016} use pseudorandom permutations to estimate the full-group $P$-value for each hypothesized value of the parameter in a bisection search
for the values of $\eta$ at which the $P$-value crosses $\alpha$. 
Each point in the search uses a new set of Monte Carlo simulations.
The standard bisection algorithm assumes that
the function is continuous and monotonic and
crosses $\alpha$ exactly once 
within each search interval.
Applying the standard bisection algorithm
as \citet{bardelli2016} do can fail:
(i)~the full-group $P$-value might not attain the target level for any $\eta$, or might attain the target level on an interval of values of $\eta$
 (figure~\ref{fig:two-sample-p-shift});
(ii)~sampling variability can make the simulated $P$-values non-monotonic in the shift; and
(iii)~re-estimating the $P$-value at a given $\eta$ will typically give a different value.

The first issue may cause bisection to terminate at the wrong root when $P_{\eta} = \alpha$ for a range of values of $\eta$, resulting in 
confidence intervals that are too short:
the lower endpoint of a confidence interval for $\theta$ is the 
\emph{smallest} $\eta$ for which $P_{\eta}
\ge \alpha$, and the upper endpoint of a confidence interval is the \emph{largest} $\eta$ for which $P_{\eta}
\ge \alpha$.
The second and third issues may cause the bisection algorithm to diverge.

There have been two basic approaches to reduce the computational burden of finding permutation confidence sets: 
speed up the calculation of exact full-group $P$-values using Fourier methods (e.g., \cite{tritchler1984, pagano1983}), or use stochastic approximations of the full-group $P$-value (e.g., \cite{garthwaite1996,garthwaite2009,ogorman2014}.
The latter approach can be used with Monte Carlo methods more generally, not just permutation methods.
Some methods approximate the full-group $P$-value by simulation, then account for the uncertainty in that approximation \citep{ernst2004permutation, good2006}.

To the best of our knowledge, extant algorithms only address two specific cases: the one-sample problem with the sample mean test statistic, or the two-sample problem with the difference-in-sample-means test statistic. 
We are not aware of any published algorithm that can handle general randomization schemes or that works with a variety of test statistics.

\citet{tritchler1984} 
seeks to find the full-group $P$-value efficiently by
taking advantage of the fact that the probability distribution of a sum is the convolution of the probability distributions, which becomes a product in the Fourier domain. 
The proof that their algorithm works relies on 
Theorem~1 of \citet{hartigan1969}
regarding ``typical values,'' which in turn assumes that the data have a continuous distribution---a
false assumption in many applications.
The method computes the distribution of the sum efficiently but evaluates 
that distribution at so many
points that in practice it is limited to small problems.
Unfortunately, as \citet{bardelli2016} notes, there seems to be no public implementation of this method and the description in
\citet{tritchler1984} omits essential details.

\citet{garthwaite1996}  
applies the Robbins-Monro stochastic optimization algorithm \citep{robbinsMonro51} to find an approximate confidence interval.
The Robbins-Monro algorithm is a stochastic iterative method for approximating a root of a univariate function of the expectation of a random variable from realizations of the random variable.
It assumes that the function is strictly monotone at the root, which is assumed to be unique; as previously mentioned, both assumptions are incorrect here (see figure~\ref{fig:two-sample-p-shift}).

The \citet{garthwaite1996} algorithm can be quite sensitive to the starting point;
\citet{garthwaite2009} propose averaging the results of the last $n^*$ iterations of the algorithm to increase stability. 
\citet{ogorman2014} shows that the \citet{garthwaite1996} and \citet{garthwaite2009} algorithms can produce very different confidence bounds---even from the same starting point---and recommend averaging eight runs. 
This increases the computational burden but 
does not guarantee that the resulting interval is
conservative.

\section{Randomized tests}
\label{sec:randomized-tests}
Rather than trying to approximate the ``full group'' or ``all assignments'' $P$-value, one can base confidence sets on conservative randomized tests.
Suppose we have a family of probability distributions 
$\{\mathbb{P}_\eta: \eta \in \Theta\}$ on the measurable space $\cX$, indexed by the abstract parameter $\eta \in \Theta$.
We will observe data $X \sim \Pr_\theta$ for some $\theta \in \Theta$.
We want to make tests and confidence sets for $\theta$
using $X$, possibly relying in addition on an independent,
auxiliary random variable $U$.\footnote{%
We treat the auxiliary randomness $U$ abstractly: it is not necessarily a uniformly distributed real-valued random variable,
as is common in the Neyman-Pearson framework.}
In the examples below, $U$ comprises 
randomness arising from Monte Carlo sampling.

For each $\eta \in \Theta$, let $\phi_\eta(x, u): (\cX,  \cU) \rightarrow \{0, 1\}$ be the
rejection function for a 
level $\alpha$ test of the hypothesis $\theta = \eta$:
reject the hypothesis $\theta = \eta$ 
when the data are $X=x$
and the auxiliary randomness is $U=u$ iff $\phi_\eta(x, u) = 1$.\footnote{%
    We assume that $\{\Pr_\eta\}_{\eta \in \Theta}$ have a common dominating measure and that $\phi_\eta$ is jointly measurable with respect to $\cX$ and $\cU$.
}
The test defined by $\phi_\eta$ has (conservative) significance level $\alpha$ for the hypothesis $\theta = \eta$ iff
\begin{equation} \label{eq:alpha-def}
    \bE_\eta \phi_\eta (X, U) \le
    \alpha,
\end{equation}
where the expectation is with respect to the joint distribution of $X$ and $U$,
computed on the assumption that $\theta = \eta$.

The duality between tests and confidence sets establishes that the set of all $\eta \in \Theta$ for which the hypothesis $X \sim \mathbb{P}_\eta$ would not be rejected at level $\alpha$ is a 
$1-\alpha$ confidence set for $\theta$:

\begin{theorem}[see, e.g., \citet{lehmann2005}, \S~3.5] \label{thm:duality}
For each $\eta \in \Theta$, let $\phi_\eta(x, u): (\cX, \cU) \rightarrow \{0, 1\}$ 
be the
rejection function for a test of the hypothesis $\theta = \eta$ at significance level $\alpha$.
Define $S(X, U) := \{ \eta \in \Theta : \phi_\eta(X, U) = 0 \}$. 
Then if the realized value of the data $X$ is $x$ and the realized value of the auxiliary randomness $U$ is $u$, $S(x, u)$ is a (possibly randomized) $1-\alpha$ confidence set for $\theta$.
\end{theorem}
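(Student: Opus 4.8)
The plan is to verify the coverage property directly from the definition of $S$ and the significance-level condition (\ref{eq:alpha-def}), exploiting the equivalence between ``the confidence set contains $\eta$'' and ``the test of $\theta = \eta$ fails to reject.'' The only probabilistic input is that the test of the \emph{true} null achieves level $\alpha$; everything else is bookkeeping about set membership. To keep the dummy variable distinct from the parameter I am testing, I would first rewrite $S(X,U) = \{\zeta \in \Theta : \phi_\zeta(X,U) = 0\}$.

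First I would fix an arbitrary $\eta \in \Theta$ and treat it as the true parameter value, so that $X \sim \Pr_\eta$ and the relevant law is the joint distribution $\Pr_{\eta, U}$ of $(X,U)$. By the definition of $S$, membership of this particular $\eta$ satisfies $\{\eta \in S(X,U)\} = \{\phi_\eta(X,U) = 0\}$; that is, $\eta$ lies in the confidence set exactly when the level-$\alpha$ test of the hypothesis $\theta = \eta$ does not reject. This translation is the crux of the argument: it reduces a statement about a random set to a statement about a single $\{0,1\}$-valued indicator.

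Next I would compute the coverage probability. Since $\phi_\eta$ is $\{0,1\}$-valued, its expectation under $\Pr_{\eta,U}$ equals the rejection probability, so
\[
\Pr_{\eta, U}\{\eta \in S(X, U)\} = \Pr_{\eta, U}\{\phi_\eta(X, U) = 0\} = 1 - \bE_\eta \phi_\eta(X, U) \ge 1 - \alpha,
\]
where the final inequality is precisely the significance-level assumption (\ref{eq:alpha-def}). Because $\eta$ was arbitrary, this lower bound holds simultaneously for every $\eta \in \Theta$; specializing to the actual true value $\theta$ gives $\Pr_{\theta, U}\{\theta \in S(X, U)\} \ge 1 - \alpha$, which is the defining coverage property of a (randomized) $1-\alpha$ confidence set, completing the proof.

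I do not expect a genuine technical obstacle, as the conclusion is a one-line application of (\ref{eq:alpha-def}); the measurability caveats needed to make $\Pr_{\eta,U}\{\eta \in S(X,U)\}$ well defined are exactly the joint-measurability hypotheses already imposed on $\phi_\eta$. The one point I would flag explicitly is the quantifier structure emphasized in the introduction: coverage at $\theta$ invokes only the level of the test whose null is $\theta = \theta$, evaluated under $\Pr_\theta$, while the tests of the false nulls ($\eta \neq \theta$) are irrelevant to the guarantee. This is precisely what later licenses reusing a single Monte Carlo draw $U = u$ across all $\eta$, so I would deliberately derive the bound for every $\eta$ and only restrict to the true $\theta$ at the very end, to make that logic transparent.
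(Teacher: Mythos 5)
Your proof is correct and is essentially the paper's own argument: both reduce coverage to the single event $\{\phi_\theta(X,U)=0\}$ for the true parameter and apply the level-$\alpha$ condition (\ref{eq:alpha-def}), the only difference being that you spell out the indicator-expectation identity and the quantification over $\eta$ explicitly, while the paper compresses this to one line. No gap; nothing further needed.
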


\begin{proof}
\begin{equation}
    \Pr_{\theta,U} \{ S(X, U) \ni \theta \} =
    \Pr_{\theta, U} \{ \phi_\theta(X, U) < 1 \} 
    \ge 1-\alpha,
\end{equation}
where $\Pr_{\theta,U}$ is  the joint distribution of $X$ and $U$.
\end{proof}

\paragraph*{Remark~1}
The proof of Theorem~\ref{thm:duality} shows that the 
coverage probability of the confidence set rides entirely on the fact that the (single) test of the true null $\eta = \theta$ has level $\alpha$.
The tests involving other values of $\eta \in \Theta$ play no role whatsoever.
In particular, the tests of different nulls do not need to be valid simultaneously; dependence among them does not matter; and
a single $U$ can be used for every test.
Hence, if the Monte Carlo test of the \emph{true} null $\theta = \eta$
is conservative,
a single set of simulations (i.e., a single realization of $U$) can be used to test every $\eta \in \Theta$.

Section~\ref{sec:random-permutation} gives examples
of conservative Monte Carlo tests.
Many of the tests we consider
are derived from $P$-values, random variables with distributions that are stochastically dominated by the uniform distribution on $[0, 1]$ when the
null hypothesis is true:
$P_\eta = P_\eta(X, U)$ is a $P$-value for the hypothesis $\theta = \eta$ iff
\begin{equation} \label{eq:p-def}
  \Pr_{\eta, U} \{ P_\eta(X, U) \le p \} \le p,
  \;\;\; \forall p \in [0, 1], 
\end{equation}
where the
probability is with respect to the joint distribution $X$ and $U$.
If $P_\eta$ is a $P$-value, then (in the previous notation)
$$
\phi_\eta(x, u) := \mathbf{1}_{P_\eta(x, u) \le \alpha}$$
defines a test with significance level $\alpha$.

In turn, many of the $P$-values we consider arise from
a \emph{test statistic} $T : \mathcal{X} \rightarrow \Re$.
Monte Carlo simulation can be used to calibrate $T$ to produce conservative $P$-values, as discussed in the next section.

\section{Conservative Monte Carlo Tests} \label{sec:permute}

We begin by listing some conservative Monte Carlo tests.
All depend on a test statistic $T : \mathcal{X} \rightarrow \Re$, with larger values of $T$ considered to be
stronger evidence against the null, i.e., the $P$-value
decreases monotonically with $T$.
All use Monte Carlo to simulate $N$ new data sets $\{Y_j\}_{j=1}^N$ using the assumption that the null hypothesis is true.
They construct a $P$-value from the values of the test statistic for the original and simulated data.
The tests thus involve $X$, $T$, and $\{Y_j\}_{j=1}^N$.

In \emph{simulation tests}, $\{Y_j\}_{j=1}^N$ are simulated directly from the null distribution (possibly using
importance sampling).
In \emph{permutation tests}, $\{Y_j\}_{j=1}^N$ are 
generated by applying randomly selected elements of 
a group to the original data---a group under which
the probability distribution of the data is invariant if the null is true.
In \emph{randomization tests}, $\{Y_j\}_{j=1}^N$
are generated by randomly 
re-assigning the units
using the randomization design that was used
to collect the original data and adjusting the data
for each re-assignment 
using the assumption that the null is true.

\subsection{Simulation tests}
\label{sec:simulation-tests}
We observe data $X \sim \mathbb{P}$.
Consider the null hypothesis that $\mathbb{P} = \mathbb{P}^*$, a known distribution. 
Let $\{Y_j\}_{j=1}^n$ be IID $\mathbb{P}^*$.
Then the following is a conservative $P$-value \citep{barnard1963, birnbaum1974, bolviken1996, foutz1980}:
\beq \label{eq:randomized-no-weights}
    P := \frac{1 + \sum_j 1\{T(Y_j) \ge T(X) \}}{1+N}.
\eeq
The proof relies only on the fact that, if the null hypothesis is true, all rank orders of
$\{T(X), T(Y_1), \ldots, T(Y_N)\}$ are equally likely.
This result can be extended to sampling $\{Y_j\}$ from a known ``proposal distribution'' $\mathbb{Q}$ instead of sampling from the null distribution $\mathbb{P}^*$.
Suppose that $\mathbb{P}^*$ and $\mathbb{Q}$
are absolutely continuous with respect to a dominating measure 
$\mathbb{F}$ and that 
$d\mathbb{P}^*/d\mathbb{F}(x) = 0$ 
whenever $d\mathbb{Q}/d\mathbb{F}(x) = 0$
(i.e., any set with strictly positive probability under $\mathbb{P}^*$ has strictly positive probability under $\mathbb{Q}$). 
Let $\{Y_j\}_{j=1}^n$ be an IID sample from $\mathbb{Q}$. 
Define the \emph{importance weight}
$$ 
    w(x) := \frac{d\mathbb{P}^*/d\mathbb{F}}{d\mathbb{Q}/d\mathbb{F}}(x),
$$
with the convention $0/0 := 0$.
Then the following are conservative $P$-values \citep{harrison12}:

\beq \label{eq:randomized-weights-1}
    P := \frac{w(X) + \sum_{j=1}^N w(Y_j) 1\{ T(Y_j) \geq T(X) \}}{1 + N},
\eeq
and
\beq \label{eq:randomized-weights-2}
    P := \frac{w(X) + \sum_{j=1}^N w(Y_j) 1\{ T(Y_j) \geq T(X) \}}{w(X) + \sum_{j=1}^N w(Y_j)}.
\eeq
When $\mathbb{Q} = \mathbb{P}^*$,
$w \equiv 1$ and 
definitions~\ref{eq:randomized-weights-1}
and \ref{eq:randomized-weights-2} both reduce to 
equation~\ref{eq:randomized-no-weights}.

\subsection{Random permutation tests}
\label{sec:random-permutation}

Consider the null hypothesis that the probability distribution $\mathbb{P}$ of the data $X \in \mathcal{X}$ is invariant under some group $\calG$ of transformations on $\mathcal{X}$, so that $X \sim g(X)$ for all $g \in \calG$.
Random permutation tests involve generating $\{Y_j\}$ by applying randomly selected elements of $\calG$ 
to the original data $X$.
The elements can be selected in a number of ways:
 with or without replacement, with or without weights, from all of $\calG$ or from a subgroup of $\calG$; or 
 a randomly selected element of $\calG$ can be applied to a fixed subset of $\calG$. 
Below are some conservative $P$-values
for those sampling approaches.

\begin{enumerate}
    \item 
    $\{\hat{g}_i\}_{j=1}^N$ is a uniform random sample (with or without replacement) of size $N$ from $\calG$ or a subgroup of $\calG$.
    Let $Y_j := \hat{g}_j(X)$.
    Then

    \beq \label{eq:perm-P-1}
    P := \frac{1 + \sum_{j=1}^N 1\{T(Y_j) \geq T(X)\}}{1 + N}
    \eeq
    is a conservative $P$-value 
   \citep{dwass1957,davison1997,ramdas2023}.
    
    \item 
    $\{g_j\}_{j=1}^N$ is a fixed subset of 
    $N$ elements of $\calG$, not necessarily a subgroup; $\hat{g}$ is drawn uniformly at random from $\calG$; and $Y_j := g_j \hat{g}^{-1}(X)$.

    \beq \label{eq:perm-P-2}
    P := \frac{\sum_{j=1}^N 1\{T(Y_j)) \geq T(X) \}}{N}
    \eeq
    is a conservative $P$-value \citep{ramdas2023}.
    
    \item 
    $\calG$ is a finite group;
    $\hat{g}$ is selected at random from $\calG$ with probability $p(g)$ of selecting $g \in \calG$;
    and $Y_j := g_j \hat{g}^{-1}(X)$.
    
    \beq \label{eq:perm-unequal-prob}
    P :=  \sum_{j=1}^{|\calG|} p(g_j) \cdot 
    1\{ T(Y_j) \geq T(X) \}
    \eeq is a conservative $P$-value \citep{ramdas2023}.
    
\end{enumerate}

\noindent
Note that equation~\ref{eq:perm-P-1} has the same form as 
equation~\ref{eq:randomized-no-weights}, but equation~\ref{eq:perm-P-1}
in general involves sampling from the conditional distribution given the orbit of the observed data, while equations~\ref{eq:randomized-no-weights},
\ref{eq:randomized-weights-1},
and \ref{eq:randomized-weights-2} involve sampling from the unconditional distribution.
The $P$-values in equations~\ref{eq:perm-P-2}
and \ref{eq:perm-unequal-prob} involve
drawing only one random permutation $\hat{g}$, then applying it to other elements of $\cal{G}$ to create $\{Y_j\}_{j=1}^N$.

\subsection{Randomization tests}
\label{sec:randomization-tests}
Unlike (random) permutation tests, which rely on the invariance of the distribution $\mathbb{P}$ of $X$ under $\calG$
if the null is true, randomization tests rely on the fact that
generating the data involved randomizing units to treatments (in reality or hypothetically).

Suppose there are $n$ units, each of which is
assigned to one of $K$ possible treatments, $t=1, \ldots K$.
A \emph{treatment assignment} $\bsigma$ assigns a treatment to each unit: it is a mapping from $\{1, \ldots, n\}$ to 
$\Sigma \subset \{1, \ldots, K\}^n$,
where 
$\Sigma$ is the set of treatment assignments the original randomization design might have produced.
Depending on the design, $\Sigma$ will be a different subset of $\{1, \ldots, K\}^n$.
The assignment might use simple random sampling, Bernoulli sampling, blocking, or stratification;
the assignment probabilities might depend on
covariates.
Let 
$p(\bsigma)$ be the probability that the original randomization would make the assignment $\bsigma$, for $\bsigma \in \Sigma$. 
Let $\bsigma_0$ denote the actual treatment assignment.
The data $X := (W, \bsigma_0)$ comprise the (vector of) responses $W$
and the corresponding treatment assignment $\bsigma_0$.

Suppose we draw a weighted random sample of $N$ treatment assignments $\bsigma_j \in \{1, \ldots, K\}^n$, $j=1, \ldots, N$, with or without replacement, with probability $p(\bsigma)$ of selecting $\bsigma \in \Sigma$ in the first draw (adjusting the selection probabilities appropriately in subsequent draws if the sample is drawn without replacement).
Let $Y_j := (W, \bsigma_j)$, $j=0, \ldots, N$. 
Then
\beq \label{eq:rand-sample}
P := \frac{\sum_{j=0}^{N} p(\bsigma_j) \cdot 
    1\{ T(Y_j) \geq T(X) \}}{\sum_{j=0}^N p(\bsigma_j)}
\eeq 
is a conservative $P$-value.
Equation~\ref{eq:rand-sample} has the same form as equation~\ref{eq:randomized-weights-2}, and when all $\bsigma
\in \Sigma$ are equally likely, equation~\ref{eq:rand-sample} has the same form as equations~\ref{eq:randomized-no-weights} and \ref{eq:perm-P-1}.

Another conservative $P$-value 
has the same form as equation~\ref{eq:perm-unequal-prob}; it involves a weighted sum over all possible assignments
$\Sigma$, just as equation~\ref{eq:perm-unequal-prob}
involves a weighted sum over all group elements.
Let $Y_j := (W, \bsigma_j)$.
Then
\beq \label{eq:rand}
P := \sum_{j=1}^{|\Sigma|} p(\bsigma_j) \cdot 
    1\{ T(Y_j) \geq T(X) \}
\eeq
is a conservative $P$-value 
\citep{hemerik2021, kempthorne1969, zhang2023, ramdas2023}.

\subsection{Tests about parameters}
\label{sec:parameter}
The tests above do not explicitly involve parameters.
They can be extended to a family of tests, one for
each possible parameter value $\eta \in \Theta$,
in various ways.
For example, suppose there is a bijection $f_\eta: \mathcal{X} \rightarrow \mathcal{X}$
such that for any $\mathbb{P}_0$-measurable set
$A$, $f_\eta(A)$ is $\mathbb{P}_\eta$-measurable and $\mathbb{P}_\eta (f_\eta(A)) = \mathbb{P}_0 (A)$,
and for any $\mathbb{P}_\eta$-measurable set
$A$, $f_\eta^{-1}(A)$ is $\mathbb{P}_0$-measurable and $\mathbb{P}_\eta (A) = \mathbb{P}_0 (f_\eta^{-1}(A))$.
Then a $P$-value for the hypothesis $\theta = \eta$
can be calculated by using $T(f_\eta^{-1}(X))$ and 
$\{T(f_\eta^{-1}(Y_j))\}$ in place of of 
$T(X)$ and 
$\{T(Y_j)\}$ in equation~\ref{eq:randomized-no-weights}, \ref{eq:perm-P-1}, or \ref{eq:perm-P-2}.
For tests about the effect of treatment in the two-sample problem, see section~\ref{sec:oneTwoSample}.

Another common problem is to test a hypothesized value 
of a location parameter $\theta \in \Theta \subset \Re^m = \mathcal{X}$ in a location family
$\{\mathbb{P}_\eta: \eta \in \Theta\}$. 
In a location family, 
for any $\mathbb{P}_{\mathbf{0}}$-measurable set $A \in \mathcal{X}$, $\mathbb{P}_\eta (A) := \mathbb{P}_{\mathbf{0}} (A-\eta)$.
In other words, $f_\eta(x) = x + \eta$
and $f_\eta^{-1}(x) = x - \eta$.
A test of the hypothesis $\theta = \mathbf{0}$ can be used to test the hypothesis $\theta = \eta$ by applying the test to the transformed data $X-\eta$ and the transformed values $\{Y_j - \eta\}$.
For simulation test $P$-values using importance sampling (definitions \ref{eq:randomized-weights-1} and \ref{eq:randomized-weights-2}) we can test the hypothesis 
$\theta = \eta$ by using the weights $w_\eta(x) := d\mathbb{P}_\eta/d\mathbb{Q}$; nothing else needs to be changed.
In particular, the same sample can be used.

\section{Inverting tests to construct confidence sets}
\label{sec:invert}
We use $P_\eta(X, U)$ to denote a $P$-value for the hypothesis $\theta = \eta$, for data $X$ and auxiliary
randomness $U$.
We assume that the family $\{P_\eta(x, u): \eta \in \Theta, \;x \in \mathcal{X}, \; u \in \mathcal{U} \}$
of $P$-values has been selected before observing $X$ or $U$.
Remark~1, above, notes that it suffices to use a single set of Monte Carlo simulations (a single realization of $U$) to test the hypothesis 
$\theta = \eta$ for all 
$\eta \in \Theta$.
As mentioned in section~\ref{sec:parameter},
how the simulations can be re-used depends on how the
test depends on the parameter $\eta$.

Given the $P$-value function $P_\eta(\cdot, \cdot)$, a $1-\alpha$ confidence set for $\theta$ is the $\alpha$ superlevel set of the (deterministic) $P$-value function for the observed values $X=x$ and $U=u$:
$$
S_\alpha(x, u) := \{ \eta \in \theta : P_\eta(x, u) \ge 1- \alpha\}.
$$
How can we find that set (or a set that contains it, resulting in a more conservative confidence set) in practice?

The $P$-value function is bounded, but unless it has additional structure, it is in general impossible to calculate its $\alpha$ superlevel set.
The numerical problem amounts to finding a particular contour of a discontinuous function on an unbounded domain.

There are a number of strategies that may help find 
a valid confidence set, depending on details of the $P$-value function
and prior constraints on $\theta$:
\begin{enumerate}
    \item If $\Theta$ is a subset of $\Re$ and
    $P_\eta$ is monotone or weakly unimodal (i.e., quasiconcave), a conservative confidence set can be found by a modification of the bisection algorithm.
    See section~\ref{sec:confsets}.

    \item Even in dimensions greater than 1, if $P_\eta(x, u)$ is a weakly unimodal function of $\eta$,\footnote{%
    A function $f$ on $\Re^\ell$ is weakly unimodal if there is a point $\eta \in \Re^\ell$ such that for all $\zeta \in \Re^\ell$ with $\|\zeta\| = 1$, $f(\eta + c\zeta) \le f(\eta)$ for all $c \ge 0$.
    } 
    finding a superset of the confidence set may be computationally straightforward.
    In such situations, a suitable estimate of $\theta$ may be a mode of $P_\eta$.
    See section~\ref{sec:confsets}.

    \item Test every value
    of the parameter where the $P$-value could change.
    In some situations, the set of such values can be characterized explicitly (c.f.\ equation~\ref{eq:two_sample_shift}). 
    
    \item If $\Theta$ is a bounded subset of $\Re^\ell$ and the modulus of continuity of the $P$-value function has a known upper bound,
    one can test a grid of values of $\eta$
    and refine the grid adaptively to ensure that between grid points, the $P$-value must be on the same side of $\alpha$ as its nearest neighbor.

    \item If $\Theta$ is a bounded subset of $\Re$
    and the $P$-value function can be written as 
    difference of two monotone functions, 
    one can test a grid of values of $\eta$
    and refine the grid adaptively to ensure that between grid points, the $P$-value must be on the same side of $\alpha$ by ignoring one of the two functions, depending on whether the values at neighboring gridpoints are both above or both below $\alpha$ \citep{ottoboni2018}.

    \item In some situations, it may be possible to find extreme points of the (not necessarily connected) confidence set; the convex hull of those points is a conservative confidence set.
\end{enumerate}
Below we explore the case of scalar parameters and monotone or quasiconcave (weakly unimodal) $P$-value functions.

\subsection{Confidence intervals for scalar parameters when the \textit{P}-value is monotone or quasiconcave}
\label{sec:confsets}

\subsubsection{One-sided confidence bounds}
Suppose that $P_\eta(X, U)$ is a conservative 
$P$-value for the hypothesis $\theta = \eta$ for data $X$ and auxiliary randomness $U$, for instance, derived from a test statistic
$T_\eta$ using random permutations as described in Section~\ref{sec:random-permutation}.
Given data $X=x$ and auxiliary randomness $U=u$, 
a lower (upper) $1-\alpha$ confidence bound for the scalar parameter $\theta$ is the smallest (largest) value of $\eta$ for which $P_\eta(x, u) \ge \alpha$;
every $P$-value $P_\eta$ is calculated using the same data and the same set of random permutations, i.e., the same values of $X$ and $U$.
If $P_\eta(x, u)$ is monotone increasing (decreasing) in $\eta$, it is straightforward to find a lower (upper) confidence bound
using a modification of the bisection algorithm (to account for the fact that $P(\eta)$ is not continuous in general;
see, e.g., figure~\ref{fig:two-sample-p-shift} and
\cite{aronow2023}), as
shown in Algorithm~\ref{alg:one-sided-lower} and \ref{alg:one-sided-upper}.

Suppose $\theta^-$ is the largest conservative
one-sided lower confidence bound based on inverting tests defined by the monotonically nondecreasing $P$-value function $P_\eta(x, u)$, i.e., 
\beq
   \theta^- := \sup \{\eta : P_\zeta(x, u) \le \alpha \;\; \forall \zeta \le \eta \}.
\eeq
Algorithm~\ref{alg:one-sided-lower} returns a value  in $[\theta^--e, \theta^-]$
for any specified tolerance $e > 0$. 
Similarly, suppose $\theta^+$ is the smallest conservative one-sided upper confidence bound based on inverting tests defined by the monotonically nonincreasing $P$-value function $P_\eta(x, u)$, i.e., 
\beq
   \theta^+ := \inf \{\eta : P_\zeta(x, u) \le \alpha \;\; \forall \zeta \ge \eta \}
\eeq
Algorithm~\ref{alg:one-sided-upper} returns a value in $[\theta^+, \theta^+ +e]$ for any specified tolerance $e > 0$. 

\begin{algorithm} 
\caption{
\protect \label{alg:one-sided-lower} 
Modified Bisection Algorithm for Lower Confidence Bound (LCB) for $\theta$}
\begin{algorithmic}
  \small
  \Statex \textbf{Input:} Data $X=x$ and auxiliary randomness $U=u$; conservative $P$-value $P_\eta(X, U)$ such that $P_\eta(x, u)$ is monotonically nondecreasing with $\eta$;  non-coverage level $\alpha \in (0, 1)$; tolerance $e > 0$; 
  initial step size $\delta_0 > 0$ (e.g., $\delta_0 = 1)$; 
  initial trial value $\eta_0 \in \Re$ (e.g., $\eta_0 = 0$).
  \Statex \textbf{Output:} Conservative $1-\alpha$ LCB for $\theta$ within $e$ of the largest valid 
  $1-\alpha$ LCB based on 
  $P_{\eta}(x, u)$
  \Statex 
   \textbf{Step 1: Find finite a interval $[a, b]$ that brackets the LCB}
  \If{$P_{\eta_0}(x, u) \ge \alpha$} \Comment{$P_{\eta_0}(x, u) \ge \alpha$}
    \State $b \gets \eta_0$
    \State $\delta \gets \delta_0$
    \State $\eta_* \gets \eta_0 - \delta$
    \While{$P_{\eta_*}(x, u) \ge \alpha$} \Comment{Search for finite lower bound on the LCB}
        \State $\delta \gets 2\delta$
        \State $\eta_* \gets \eta_* - \delta$
    \EndWhile
    \State $a \gets \eta_*$
    
 \Else \Comment{$P_{\eta_0}(x, u) < \alpha$}  
    \State $a \gets \eta_0$
    \State $\delta \gets \delta_0$
    \State $\eta_* \gets \eta_0 + \delta$
    \While{$P_{\eta_*}(x, u) < \alpha$} \Comment{Search for finite upper bound on the LCB}
        \State $\delta \gets 2\delta$
        \State $\eta_* \gets \eta_* + \delta$
    \EndWhile
    \State $b \gets \eta_*$
  \EndIf

  \Statex \textbf{Step 2: Conservative bisection to find LCB}
  \While{$|b - a| > e$}
    \State $\eta_* \gets (a + b) / 2$ 
    \If{$P_{\eta_*}(x, u) < \alpha$}
      \State $a \gets \eta_*$
    \Else
      \State $b \gets \eta_*$
    \EndIf
  \EndWhile
  
  \State \textbf{Return:} $a$ \Comment{$a \in [\theta^- - e, \theta^-]$ is a conservative $1-\alpha$ lower confidence bound for $\theta$}
\end{algorithmic} 
\end{algorithm}

\begin{algorithm} 
\caption{
\protect \label{alg:one-sided-upper} 
Modified Bisection Algorithm for Upper Confidence Bound (UCB) for $\theta$}
\begin{algorithmic}
  \small
  \Statex \textbf{Input:} Data $X=x$ and auxiliary randomness $U=u$; conservative $P$-value $P_\eta(X, U)$ such that $P_\eta(x, u)$ is monotonically nonincreasing with $\eta$; non-coverage level $\alpha \in (0, 1)$; tolerance $e > 0$; 
  initial step size $\delta_0 > 0$ (e.g., $\delta_0 = 1)$; 
  initial trial value $\eta_0 \in \Re$ (e.g., $\eta_0 = 0$).
  \Statex \textbf{Output:} Conservative $1-\alpha$ UCB for $\theta$ within $e$ of the smallest valid 
  $1-\alpha$ UCB based on 
  $P_{\eta}(x, u)$
  \Statex 
   \textbf{Step 1: Find a finite interval $[a, b]$ that brackets the UCB}
  \If{$P_{\eta_0}(x, u) \ge \alpha$} \Comment{$P_{\eta_0}(x, u) \ge \alpha$}
    \State $a \gets \eta_0$
    \State $\delta \gets \delta_0$
    \State $\eta_* \gets \eta_0 + \delta$
    \While{$P_{\eta_*}(x, u) \ge \alpha$} \Comment{Search for finite upper bound on the UCB}
        \State $\delta \gets 2\delta$
        \State $\eta_* \gets \eta_* + \delta$
    \EndWhile
    \State $b \gets \eta_*$
    
 \Else \Comment{$P_{\eta_0}(x, u) < \alpha$}  
    \State $b \gets \eta_0$
    \State $\delta \gets \delta_0$
    \State $\eta_* \gets \eta_0 - \delta$
    \While{$P_{\eta_*}(x, u) < \alpha$} \Comment{Search for finite lower bound on the UCB}
        \State $\delta \gets 2\delta$
        \State $\eta_* \gets \eta_* - \delta$
    \EndWhile
    \State $a \gets \eta_*$
  \EndIf

  \Statex \textbf{Step 2: Conservative bisection to find UCB}
  \While{$|b - a| > e$}
    \State $\eta_* \gets (a + b) / 2$ 
    \If{$P_{\eta_*}(x, u) < \alpha$}
      \State $b \gets \eta_*$
    \Else
      \State $a \gets \eta_*$
    \EndIf
  \EndWhile
  
  \State \textbf{Return:} $b$ \Comment{$b \in [\theta^+, \theta^+ + e]$ is a conservative $1-\alpha$ upper confidence bound for $\theta$}
\end{algorithmic} 
\end{algorithm}

In some circumstances there are more efficient ways to pick $\eta_0$ and $\delta_0$ in Algorithm~\ref{alg:one-sided-lower} and \ref{alg:one-sided-upper}.
For instance, for the two-sample problem using the difference in sample means as the test statistic,
picking $\eta_0$ to be the difference in sample means for the original data and $\delta_0$ to be the range of the data may eliminate the need to search for initial values of $L$ and $U$.

\subsubsection{Two-sided confidence intervals}

Suppose that $P_\eta(x, u)$ is quasiconcave in $\eta$, i.e., that it is weakly unimodal.
Then the confidence set---all $\eta \in \Theta$ for which the hypothesis $\theta = \eta$ is not rejected at level $\alpha$---is an interval.
Its lower endpoint is the largest $\eta$ such that $P_\zeta(x, u) \le \alpha$ for all $\zeta \in (-\infty, \eta]$;
its upper endpoint $u$ is the smallest $\eta$ 
such that $P_\zeta(x, u) \le \alpha$ for all $\zeta \in [\eta, \infty)$.

For example, consider the
standard approach of combining two one-sided tests, each at half the significance level.
If both one-sided tests are conservative, then by Bonferroni's inequality, the resulting 
two-sided test is conservative even when the two $P$-values have arbitrary dependence.
That allows us to use the same set of random permutations to construct both one-sided $P$-values and thus the two-sided confidence interval.

If one of the $P$-values (say $P_\eta^1(x, u)$) is monotone nondecreasing in $\eta$ and the other ($P_\eta^2(x, u)$) is monotone nonincreasing in $\eta$, then the overall $P$-value function $2 \min \{P_\eta^1(x, u), P_\eta^2(x, u)\}$ is quasiconcave in $\eta$.
For example, the two-sample test using
the difference in sample means as the test statistic $T$ has this property (see Section~\ref{sec:oneTwoSample}).

Quasiconcavity of $P_\eta(x, u)$ in $\eta$
makes it efficient to find both endpoints of a confidence interval for $\theta$ using a minor modification of
the bisection method.
The corresponding procedure is given in Algorithm~\ref{alg:modified-bisect-2-sided}.

\begin{algorithm} 
\caption{Modified bisection algorithm for the two-sided confidence interval (CI) for $\theta$}
\begin{algorithmic}[1]
  \small
  \Statex \textbf{Input:} Data $X=x$ and auxiliary randomness $U=u$; conservative $P$-value $P_\eta(x, u)$ such that $P_\eta(x, u)$ is quasiconcave in $\eta$;  non-coverage level $\alpha \in (0, 1)$; tolerance $e > 0$; 
  initial step size $\delta_0 > 0$ (e.g., $\delta_0 = 1)$; 
  initial value $\eta_0 \in \Re$ such that $P_{\eta_0}(x, u) \ge \alpha$.
  \Statex \textbf{Output:} Conservative $1-\alpha$ CI whose endpoints are each within $e$ of the corresponding endpoints of the smallest valid 
  $1-\alpha$ CI based on 
  $P_{\eta}(x, u)$
  \Statex 
   \textbf{Step 1: Enter algorithm~\ref{alg:one-sided-lower} 
   in step~1 at the branch $P_{\eta_0} \ge \alpha$ to find the conservative lower confidence bound $a$}

  \Statex \textbf{Step 2: Enter algorithm~\ref{alg:one-sided-upper} 
   in step~1 at the branch $P_{\eta_0} \ge \alpha$ to find the conservative upper confidence bound $b$}
   
  \Statex \textbf{Return:} $[a, b]$ \Comment{$1-\alpha$ CI, no more than $2e$ longer than shortest valid $1-\alpha$ CI based on $P_\eta$}
\end{algorithmic} \label{alg:modified-bisect-2-sided}
\end{algorithm}

\footnotetext{
      A value of $L$ can generally by found using an
      estimate of $\theta$ from the data $X$.
      Given a value of $L$, a value of $U$ can be found by adding a sufficiently large value to $L$, since the $P$-value eventually decreases monotonically as $\eta$ increases.
    }

\section{Additional efficiency in the one-sample and two-sample shift problems}
\label{sec:oneTwoSample}

For the most common test statistics for the one-sample and two-sample problems, the computational cost of finding each $P$-value can be reduced further
by saving the Monte Carlo treatment assignments (or in some cases, 
just a one-number summary of each assignment: see equations~\ref{eq:one_sample_shift} and \ref{eq:two_sample_shift}) 
and the value of the test statistic for each treatment assignment \citep{nguyen2009, pitman1937}.
This section explains how.

\paragraph*{The one-sample problem.} 
The one-sample problem is to find a confidence interval for the center of symmetry $\theta$ of a symmetric distribution $\Pr_\theta$
from an IID sample $X = \{X_j \}_{j=1}^n$
from $\Pr_\theta$.
If $\theta = \eta$, the distribution of
$X_j - \eta$ is symmetric around 0, i.e.,
conditional on $|X_j - \eta|$,
$X_j - \eta$ is equally likely to be $\pm(X_j - \eta)$.
A common test statistic for the hypothesis $\theta = \eta$ is the sum of the signed differences from $\eta$:
\beq
  T_\eta(\boldsymbol{x}) := \sum_{j=1}^n (x_j-\eta).
\eeq
Random permutation tests for the one-sample problem involve the distribution of this test statistic when
the signs of $\{X_j - \eta\}$ are randomized independently.
Let $\bsigma \in \{-1, 1\}^n$ be IID uniform random signs.
For any two vectors $\bx$, $\by$, with the same dimension $n$, define the componentwise product $\bx \odot \by := (x_1 y_1, \ldots, x_n y_n)$.
Let $\bX$ denote an $n$-tuple comprising the data $\{X_j\}$ in some canonical order.
We can test the hypothesis $\theta = \eta$ by comparing the value of $T_\eta(\bX)$
to the values of $T_\eta(\bsigma \odot (\bX- \eta \bone ) + \eta\bone )$ 
for $N$ randomly generated sign vectors $\bsigma$.
As noted above in section~\ref{sec:confsets},
we can re-use the values of $\bsigma$ to test $\theta = \eta$ for different values of $\eta$.
We can save even more computation by relating the value of \begin{equation}
T_{\eta, \bsigma}(\bX) :=     
T_\eta(\bsigma \odot (\bX - \eta \bone) + \eta \bone)
\end{equation}
to the value of 
$T_{0, \bsigma}(\bX) = T_0(\bsigma \odot \bX)$:
\begin{eqnarray}
    T_\eta (\bsigma \odot (\bx - \eta \bone) + \eta \bone) &=&
    \sum_{j=1}^n \left ( [\sigma_j (x_j - \eta) + \eta ] - \eta \right  ) \nonumber \\
    &=& 
    \sum_{j=1}^n \sigma_j x_j - \sum_j \sigma_j \eta \nonumber \\
    &=& T_0(\bsigma \odot \bX) - \eta \sum_j \sigma_j. \label{eq:one_sample_shift}
\end{eqnarray}
Thus, for each vector of signs $\bsigma$, we need only keep track of $T_0(\bsigma \odot \bX )$ and $\sum_j \sigma_j$ to determine the value of the test statistic for any other hypothesized value of $\eta$.

\paragraph*{The two-sample problem.}
The two-sample problem involves a set of $n$ units each randomized to one of two conditions, treatment or control,
by randomly allocating $m$ units into treatment and the remaining $n-m$ to control by simple random sampling, with $n$ and $m$ fixed in advance.
The response of unit $j$ is 
$w_j = r_j + \sigma_j \theta$, where $r_j$ is the `baseline' response of unit $j$, $\sigma_j = 1$ if unit $j$ is assigned to treatment, and $\sigma_j = 0$ if unit $j$ is assigned to control.\footnote{%
    This is the \emph{the Neyman model for causal inference} \citep{neyman1990, imbens2015causal}. 
    The Neyman model implicitly assumes that
    each unit's response depends only on whether that unit
    is assigned to treatment or control, not on the assignment of other units.
    \citet{caughey2017} give conditions under which this model (with particular test statistics) can be used to draw inferences about percentiles of the effect of treatment even when the effect varies across units, i.e., when
    $w_j = r_j + \sigma_j \theta_j$.
}
We seek a confidence interval for $\theta$ from the resulting data.
Let $\mathbf{W} := (W_1, \ldots, W_n)$ be an $n$-tuple comprising the
responses in some canonical order in which $W_1, \ldots, W_m$
are the responses of the $m$ units assigned to treatment and
$W_{m+1}, \ldots, W_n$ are the responses of the $n-m$ units assigned to control.
Let $\bsigma$ be a vector containing $m$ ones and $(n-m)$ zeros, representing a treatment assignment.
(In the original assignment $\bsigma_0$, the first $m$ elements are 1 and the remaining $n-m$ are 0.)

Randomization tests for the two-sample problem involve the distribution of the test statistic when
units are randomly assigned to treatment and control using the same randomization design the original experiment used,
on the assumption that the shift is $\eta$, i.e., that if a unit 
originally assigned to treatment
had instead been assigned to control, its response
would have been less by $\eta$, and if a unit that was originally assigned to control
had instead been assigned to treatment,
its response would have been greater by $\eta$.

Recall from section~\ref{sec:randomization-tests} that for randomization tests, the data $x$ comprise an ordered pair: the vector $\bw$ of responses and the corresponding vector $\bsigma$ of treatment assignments.
A common test statistic is the difference between
the mean response of the treatment group and the mean
response of the control group,\footnote{%
    A test that uses the difference in sample means as the test statistic is
    equivalent to a test based on the mean of either group, since the total is fixed: the statistics are monotone functions of each other.
} 
\begin{equation}
T_\eta(x) = T_\eta((\bw, \bsigma)) := \frac{1}{m}\sum_{j:\sigma_j = 1} (w_j + \eta1_{j > m}) - \frac{1}{n-m}\sum_{j:\sigma_j = 0} (w_j - \eta1_{j \le m}),
\end{equation}
or its Studentized
version \citep{wu2021}.
The term $+\;\eta 1_{j>m}$ in the first summand adds the hypothesized treatment effect to units that were originally assigned to
control ($j>m$) but are re-assigned to treatment ($\sigma_j = 1$);
the term $-\;\eta 1_{j\le m}$ in the second summand subtracts the 
hypothesized treatment effect from units that were originally assigned to
treatment ($j \le m$) but are re-assigned to control ($\sigma_j = 0$).

If an assignment $\bsigma$ moves $\Delta := m - \sum_{j=1}^m \sigma_j$ of the units originally assigned to treatment to the control group (and vice-versa), then the difference between the test statistic when the shift is zero and the test statistic when the shift is $\eta$ is
\begin{equation}
    T_\eta((\bw, \bsigma)) - T_0((\bw, \bsigma)) = \eta \cdot \Delta \left ( \frac{1}{n-m} + \frac{1}{m}
    \right ). \label{eq:two_sample_shift}
\end{equation}
This result, implicit in \citet{pitman1937},
is straightforward to show:
Suppose that for a particular treatment assignment, the responses of the treatment group are $\{w_i\}_{i = 1}^m$ and the responses of the control group are $\{w_i\}_{i = m+1}^{n}$. 
Then, the difference in means can be written 
$$
\frac{(n-m) \sum_{i = 1}^m w_i - m \sum_{i = m+1}^n w_i}{m(n-m)}.
$$
For each of the $\Delta$ units that switch from treatment to control (and vice-versa) the change in
the test statistic is $\frac{(n-m)\eta - m(-\eta)}{m(n-m)} = \eta ( \frac{1}{m} + \frac{1}{n-m})$.
If $\Delta$ units change their treatment assignment, the total change in the test statistic is given by equation~\ref{eq:two_sample_shift}.

Consider $P$ as defined in equation~\ref{eq:perm-P-1}
using $T_\eta$.
It is straightforward to show that the $P$-value for fixed $\bw$ and a fixed set of treatment assignments is monotonically increasing in $\eta$:
Let $\eta_1$, $\eta_2 \in \Theta$ with $\eta_1 \le \eta_2$. 
By equation~\ref{eq:two_sample_shift}:
$$
T_{\eta_2}((\bw, \bsigma)) = \eta_2 \cdot \Delta \left ( \frac{1}{m} + \frac{1}{n-m} \right ) + T_{0}((\bw, \bsigma)).
$$
Thus, 
$$
T_{\eta_2}((\bw, \bsigma)) - T_{\eta_1}((\bw, \bsigma)) = (\eta_2 - \eta_1) \cdot \Delta \left ( \frac{1}{m} + \frac{1}{n-m} \right ) \geq 0
$$
since $\eta_2 \geq \eta_1$. 
Because this is true for every $\bsigma$, when the $P$-value is defined as in equation~\ref{eq:perm-P-1}, $P_{\eta_2}(x, u) \geq P_{\eta_1}(x, u)$. 
A similar argument applies to the other $P$-values defined in Section~\ref{sec:permute}, so the algorithms from Section~\ref{sec:confsets} apply.

As in the one-sample problem, for each assignment $\bsigma$,
we only need to keep track of the test statistic for $\eta = 0$ and the number of units that moved from treatment to control to find
the value of the test statistic for the assignment $\bsigma$ for any other value of $\eta$.
Section~\ref{sec:numerical-comparisons} gives numerical examples.

\section{Numerical comparison to previous methods}
\label{sec:numerical-comparisons}
The new method has time complexity $O(n)$ for the one-sample and two-sample problems, where $n$ is the number of data.
Garthwaite's algorithm also has time complexity $O(n)$, but
is approximate rather than conservative.
The number of sub-sample means (values of the shift where the $P$-value can change) considered by Tritchler's method
is exponential in $n$.\footnote{%
Enumeration of the $O(2^n)$ sub-sample means can be combined with a variant of the bisection algorithm to avoid testing them all; it is not clear whether the original implementation took
advantage of such savings.
}
\citet{tritchler1984} claims that each value can be tested with computation that is polynomial in $n$.
\citet{tritchler1984} reports that
computation times increased by almost a factor of 5 when the number of observations was doubled; see below.

Comparing the methods in 
Section~\ref{sec:extant-methods} numerically to the method proposed here is complicated for at least
three reasons:
\begin{enumerate}
    \item The method of \citet{tritchler1984} is not described completely
    \citep{bardelli2016} and there is no known software implementation,\footnote{%
    We are aware of two historical implementations, one in Fortran and one in C++; to the best of our knowledge,
    neither still exists.
    (D.\ Tritchler, personal communication, October 2021;
    N.\ Schmitz, personal communication, December 2021.)
    We attempted to implement the algorithm ourselves, but crucial details were missing from the published description.}
    so we are limited to comparisons with results already reported there.
    \item The assumptions of the ``typical set'' theorem Tritchler's method relies on are not satisfied in practice, so the intervals it produces in examples other than the published examples (where it reproduced the full group results) might have less than their nominal confidence level.
    \item The method of \citet{garthwaite1996,garthwaite2009} is approximate, not conservative, and in practice it does not converge
    reliably.
\end{enumerate}
Because of these issues, comparisons below are based on
the datasets from \citet{tritchler1984} and \citet{garthwaite1996}.

An open-source R implementation of the new algorithm is available in the \texttt{permutest} package \citep{permutest}
and a Python implementation is available at \url{https://github.com/akglazer/monte-carlo-ci}. 
The Python version uses the {\tt{cryptorandom}} library, which provides a cryptographic quality pseudo-random number generator (PRNG) based on the SHA256 cryptographic hash function (see section~\ref{sec:discuss}).
The higher quality of the {\tt{cryptorandom}} PRNG comes at a higher computational cost
than the Mersenne Twister (MT), the default in R, Python, MATLAB, and many other languages and packages.
For instance,
generating $10^7$ pseudo-random integers between 1 and 
$10^7$ using
cryptorandom takes 16s on our machine, in contrast to
0.17s for the {\tt{numpy}} implementation of MT.\footnote{%
Despite its computational cost, we
advocate using the higher-quality PRNG, especially for large problems: the MT state space is smaller than the number of permutations of 2084 items
\citep{starkOttoboni18}.
}

\paragraph*{Comparison to \citet{tritchler1984}.} 
We first compare results for the one-sample problem in \citet{tritchler1984} using Darwin's data on the differences in heights of 15~matched pairs of cross-fertilized and self-fertilized plants. 
Table~\ref{tab:trichtler} reports full-group confidence intervals generated by enumeration, the intervals reported by \citet{tritchler1984}, and intervals generated by our method using $N=10^4$ and $e=10^{-8}$.
While the intervals generated by Tritchler's method
agree with the full-group confidence intervals in this instance,
we are concerned that is not always true because of 
the violations of the assumptions of the typical value theorem.
The intervals generated by the new method are slightly longer (by 0.003) at confidence 90\% and shorter at 99\% (by 0.5) but are nonetheless conservative. 

To generate the
confidence intervals took an average of 0.23s of CPU time (across 100 replications) on a 2022 Apple MacBook Pro 
with an M1 Max chip and 64GB of unified memory,
running
macOS~12.3.

\begin{table}[ht]
    \centering
    \begin{tabular}{c|c|c|c}
   confidence level & full-group interval & 
   Tritchler & new method \\
    \hline
 90\%  &  [3.75, 38.14] & [3.75, 38.14]  & [3.857, 38.250]  \\
 95\%  & [-0.167, 41.0] & [-0.167, 41.0]  & [-0.167, 41.0]  \\
 99\%  & [-9.5, 47.0] & [-9.5, 47.0]  & [-8.80, 47.20]
    \end{tabular}
    \caption{One-sample confidence intervals at 90\%, 95\%, and 99\%, computed various ways: 
    using the full group of $2^{15}$ reflections, using the method of \citet{tritchler1984}, and using the new method with $N=10^4$ and $e=10^{-8}$. 
    Data on 15~matched pairs of plants \citep{tritchler1984}: 
    $49, -67, 8, 6, 16, 23, 28, 41, 14, 29, 56, 24, 75, 60, -48.$
    The new method produced slightly longer (by 0.003) interval at 90\% confidence and a slightly shorter (by 0.5) 99\% confidence interval, but the length depends on the seed.
    \label{tab:trichtler}}
\end{table}
\citet{tritchler1984} notes that execution time increases by about 250\% when the number of observations is increased by about 67\%.
Execution time for the new method (with $N=10^4$) only increased by about 26\%. 

Next we consider the two-sample problem example in \citet{tritchler1984}, which uses data reported in \citet{snedecor1967} on the effect of sleep on the basal metabolism of 26~college women.\footnote{%
    \citet{snedecor1967} attribute the data to a 1940 Ph.D.\ dissertation at Iowa
    State University.
    They appear to come from 
    an observational study with a hypothetical randomization rather than a randomized experiment: although students presumably were not randomly assigned to sleep different amounts of time, the statistical analysis assumes that labeling a student as having 0--6 hours versus 7+ hours of sleep would amount to a random label if sleep had no effect on metabolism.
    If this is an observational study, confounding is likely: students who sleep less than 7~hours differ from those who sleep more than 7~hours in ways other than how long they sleep.
}
The data are listed in table~\ref{tab:sleep}.
Full-group confidence intervals generated by enumeration, confidence intervals reported by \citet{tritchler1984}, and confidence intervals generated by our method using $N=10^4$ and $e=10^{-8}$ are given in table~\ref{tab:trichtler2}.
\begin{table}[!ht]
    \centering
    \begin{tabular}{ll} \hline
         7+ hours of sleep &  \begin{tabular}[c]{@{}l@{}} 35.3, 35.9, 37.2, 33.0, 31.9, 33.7, 36.0, 35.0, \\ 33.3, 33.6, 37.9, 35.6, 29.0, 33.7, 35.7 \end{tabular} \\ \hline
         0--6 hours of sleep & \begin{tabular}[c]{@{}l@{}} 32.5, 34.0, 34.4, 31.8, 35.0, 34.6, 33.5, 33.6, \\ 31.5, 33.8, 34.6 \end{tabular} \\ \hline
    \end{tabular}
    \caption{Basal metabolism (in $\mbox{kcal}/m^{2}/h$) and hours of sleep of 26~college women (source: \citet{tritchler1984}).}
    \label{tab:sleep}
\end{table}

\citet{tritchler1984} noted that CPU time increased by 458\% when the number of observations was doubled.
The new method took 0.28s (averaged across 100 iterations) to execute,\footnote{%
\label{note:mac}
2022 Apple MacBook Pro,
M1 Max chip, 64GB of unified memory, macOS 12.3.
} 
increasing 
to 0.45s (a 61\% increase) when the number of 
observations was doubled.


\begin{table}[ht]
    \centering
    \begin{tabular}{c|c|c|c}
   confidence level & full-group & 
   Tritchler &  new method \\
    \hline
 90\%  &  [-2.114, 0.386] & [-2.114, 0.386] & [-2.117, 0.380] \\
 95\%  & [-2.340, 0.650] & [-2.340, 0.650] & [-2.333, 0.643]  \\
 99\%  & [-2.814, 1.180] & [-2.814, 1.180] & [-2.850, 1.167]
    \end{tabular}
    \caption{Two-sample confidence intervals at 90\%, 95\%, and 99\% 
    computed from data in table~\ref{tab:sleep} in three ways: using the full group of $\binom{26}{11}$ reflections, using the method of \citet{tritchler1984}, and using the proposed method with $N=10^4$ and $e=10^{-8}$.
    The fourth column is based on a single seed.
    For that seed and these data, the intervals produced by the new method are slightly shorter at confidence levels 90\% and 95\% and slightly longer at level 99\%.}
    \label{tab:trichtler2}
\end{table}
The computational burden of the method of \citet{tritchler1984} becomes prohibitive quickly as the problem size $n$ grows.
Because the new method is $O(n)$, it
remains feasible for moderately large problems, and the workload can be decreased without sacrificing validity by reducing the number of replications $N$.
For example, for $N=10^4$ samples, $e=10^{-8}$,
and groups of $1,000$ units, the new method runs in approximately 22s on the laptop described above
using the {\tt{cryptorandom}} PRNG and approximately 5s using
the {\tt{numpy.random.random}} MT implementation.

\paragraph*{Comparison to Garthwaite's algorithm. } 
\citet{garthwaite1996} gives an example
involving the effect of malarial infection on the stamina of the lizard \emph{Sceloporis occidentali}. 
The data are listed in table~\ref{tab:lizards}.
\begin{table}[!ht]
    \centering
    \begin{tabular}{ll} \hline
         infected lizards ($X$) &  \begin{tabular}[c]{@{}l@{}}16.4, 29.4, 37.1, 23.0, 24.1, 24.5, 16.4, 29.1, \\ 36.7, 28.7, 30.2, 21.8, 37.1, 20.3, 28.3\end{tabular} \\ \hline
         uninfected lizards ($Y$) & \begin{tabular}[c]{@{}l@{}}22.2, 34.8, 42.1, 32.9, 26.4, 30.6, 32.9, 37.5, \\ 18.4, 27.5, 45.5, 34.0, 45.5, 24.5, 28.7\end{tabular} \\ \hline
    \end{tabular}
    \caption{Meters run in two minutes by infected and uninfected lizards 
    (source: \citet{samuels2003}, as reported by \citet{garthwaite1996}).}
    \label{tab:lizards}
\end{table}
\citet{garthwaite1996} reports $[-0.30, 10.69]$ as
the nominal 95\% confidence interval based on 6000~steps of the algorithm; as mentioned above,
that interval is approximate, not conservative.
Our implementation of the Garthwaite algorithm
produced the approximate confidence 
interval $[-0.29, 11.0]$,
using 6000~steps and the initial values of 0 and 10 for the lower and upper endpoints.
For $N = 6000$ using the $P$-value defined in equation~\ref{eq:perm-P-1}, the new algorithm yields the conservative confidence interval 
$[-0.20, 10.875]$, $0.215$ shorter than the Garthwaite interval 
in these runs.
Across 100 iterations, Garthwaite's algorithm took 0.47 seconds to execute and our new algorithm took 0.33 seconds.
While both algorithm execute quickly, intervals produced by the new algorithm are guaranteed to be conservative, while intervals produced by the Garthwaite algorithm are not.

\section{Discussion}
\label{sec:discuss}
This paper presents general methods for constructing conservative confidence sets and intervals by inverting Monte Carlo tests based on a single set of Monte Carlo samples.
Methods that use different samples for each hypothesized parameter value in effect try to
find the roots of a function from noisy data,
necessarily resulting in an approximate solution.
In contrast, using a single sample leads to finding the roots from noiseless data, which 
in principle can be done with arbitrarily high accuracy.

We present open-source Python and R implementations of new algorithms that construct confidence sets from conservative Monte Carlo tests when the $P$-value (for fixed data and a fixed set of Monte Carlo samples) is a monotone or weakly unimodal function of the parameter.
For problems with real-valued parameters, 
if the $P$-value is monotone in the parameter, a minor modification 
of the bisection algorithm
finds a conservative one-sided confidence within any desired precision of the best possible valid confidence bound based on that $P$-value function.
If the $P$-value is quasiconcave in the parameter,
using the modified bisection algorithm twice
yields a conservative two-sided confidence interval to any desired precision.
Additional computational savings are possible for common test statistics in the one-sample and two-sample problem 
by exploiting the relationship among values of the test statistics for different values of the parameter and for different Monte Carlo samples.

\paragraph*{How many Monte Carlo samples?}
The tests and confidence intervals are 
conservative no matter how few Monte Carlo samples $N$ are used,
so computation time can be reduced by 
decreasing $N$.
However, the highest attainable non-trivial confidence
level is $N/(N+1)$.
Increasing the number of samples also reduces the variability of results from seed to seed, and tends to approximate more precisely the interval that would
be obtained by examining all possible samples, permutations, or treatment assignments, the ``full group'' or ``all randomizations''
interval.
The length of the intervals (or power of the tests) may also depend on the number of Monte Carlo
samples.

\paragraph*{The PRNG matters.}
Many common PRNGs---including the Mersenne Twister, the default PRNG in
R, Python, MATLAB, SAS, and STATA (version 14 and later)---have state spaces that are too small for 
large problems \citep{starkOttoboni18}.
Depending on the problem size, a given PRNG might not
be able to generate all samples or all permutations, much less generate them with equal probability. 
Linear congruential generators (LCGs) are especially limited; even a 128-bit LCG can generate only about 0.03\% of the possible samples of size~25 from a set of 500~items.
The Mersenne Twister cannot in principle 
generate all permutations of set of 2084 items.
It can generate less than $10^{-7}$ of the samples of size $5,000$ from a population of size $10,000$,
and therefore certainly cannot generate such random samples uniformly.

    

The choice of algorithms for generating samples or permutations from the PRNG also matters:
a common algorithm for generating a sample---assign a pseudorandom number to each item, sort on that number, and take the first $k$ items to be the sample---requires a much higher-quality PRNG than algorithms that generate the sample more directly \citep{starkOttoboni18}.
For large problems, a \emph{cryptographic quality} (not necessarily \emph{cryptographically secure}, which generally involves periodic re-seeding) PRNG may be needed.
A Python implementation of a PRNG that uses
the SHA256 cryptographic hash function is available at \url{https://github.com/statlab/cryptorandom}.\footnote{%
    See \url{https://statlab.github.io/cryptorandom/}.
    The package is on PyPI and can be installed with {\tt{pip}}.
}

\paragraph*{Where is the computational cost?}
In the one-sample and two-sample examples above, the bulk of the CPU time is in generating the Monte Carlo sample, especially when using
the
{\tt{cryptorandom}} SHA256 PRNG.
The bisection stage of the algorithm is extremely fast:
changing $e$ from $10^{-8}$ to $10^{-6}$ has a
trivial effect on run times.
As an example, consider the two-sample problem with data from \citet{tritchler1984} presented in Section~\ref{sec:numerical-comparisons}. 
Generating a single Monte Carlo sample takes about 0.30 seconds.
Calculating the permutation test $P$-value using $10^4$ pseudorandom permutations by brute force for a single shift takes about 0.10s; using the shortcuts in Section~\ref{sec:oneTwoSample} reduces the time to approximately 0.0005s. 
This example required about 20~iterations of the bisection algorithm, for both the upper and lower endpoints, to achieve the desired tolerance $e = 10^{-8}$.
While the Section~\ref{sec:oneTwoSample} shortcuts substantially reduce runtime, the biggest reduction is from a single Monte Carlo sample to test every null: generating the Monte Carlo sample is the most time-consuming part of the computation.

\paragraph*{Confidence sets for percentiles of a `treatment effect' in the two-sample problem.}
In the two-sample problem, the null hypothesis $\theta = \eta$ is that the shift for \textit{every} unit is $\eta$.
\citet{caughey2017} show that this hypothesis and the corresponding confidence set can be interpreted 
as a hypothesis test and confidence set for 
the maximum and minimum of the shift even when the shift varies from unit to
unit, and how to use similar calculations to find
confidence bounds for percentiles of the shift.

\paragraph*{Multivariate parameters and non-quasi-concave $P$-value functions.}
When $\theta$ is multivariate (e.g., $\Theta \subset \Re^\ell$), natural test statistics (e.g., the maximum difference-in-means across components for the two-sample problem or nonparametric combinations of componentwise tests \citep{pesarin2010}) do not necessarily lead to $P$-values that are quasiconcave separately in the components of the parameter, but
one can still construct a valid confidence set by inverting conservative Monte Carlo tests using a single set of samples.
The same is true for scalar parameters when the $P$-value function is not monotone or quasiconcave.
However, the resulting confidence set is then typically not a simply connected set.
Section~\ref{sec:invert} lists some computational strategies that may help find a conservative confidence set.

\paragraph*{Nuisance parameters.}
In some problems, $\theta$ is multivariate but 
the quantity of interest is a real-valued functional $f$ of $\theta$, e.g., its weighted mean, a component, a contrast, or some other linear or nonlinear functional; other aspects of $\theta$ are nuisance parameters.
There are three general strategies to obtain conservative confidence
bounds for $f(\theta)$:
\begin{itemize}
    \item 
       Define the $P$-value to be
       the maximum $P$-value over all values of the nuisance parameters that correspond to 
       the hypothesized value of the parameter of interest, in effect decomposing a composite null into a union of
       simple nulls and rejecting the composite
       iff every simple null is rejected
       \citep{dufour2006, neyman1933, ottoboni2018}.
    \item 
       Define the $P$-value for the
       composite hypothesis to be
       the maximum $P$-value over a confidence set for the nuisance parameters and adjust the $P$-value accordingly
       \citep{berger1994}.
    \item 
       Construct a confidence set for the 
       entire parameter $\theta$, then 
       define the endpoints of the confidence interval for $f(\theta)$
       as the supremum and infimum of $f$ over that confidence
       set \citep{evans2002, stark1992}.   
\end{itemize}
The general strategy of re-using the Monte Carlo sample to test different hypothesized values of the parameter works with all of these.

\paragraph*{Software.}
Open-source Python and R software implementing the algorithms presented in this paper (i.e., inverting conservative Monte Carlo tests when the $P$-value is a monotone or weakly unimodal function of the parameter) is available at \url{https://github.com/akglazer/monte-carlo-ci} and in the R package \texttt{permutest} \citep{permutest}.
The method produces conservative intervals and is more efficient than any other conservative approach we are aware of.
Furthermore, the algorithms can be used to construct confidence intervals for a variety of test statistics and randomization schemes (not just the one- and two-sample problem with sample mean and difference-in-means test statistics respectively), so long as the $P$-value is a monotone or weakly unimodal function of the parameter.
We know of only a few computational implementations of confidence intervals for the one- or two-sample problem.
The {\tt{CIPerm}} R package constructs intervals for the two-sample problem based on the algorithm in \citet{nguyen2009}.
While that implementation uses a single set of permutations, it uses a brute-force search for the endpoints, which \citet{nguyen2009} note is time-consuming.
Another R~package, {\tt{Perm.CI}}, only works for binary outcomes and uses a brute-force method.
Other implementations, such as that in \citet{caughey2017}\footnote{%
See \url{https://github.com/li-xinran/RIQITE/blob/main/R/RI_bound_20220919.R}
} 
use the function {\tt{uniroot}}, an implementation of Brent's method, which (incorrectly) assumes the $P$-value is continuous in $\eta$.

\paragraph*{Future work.} We aim to build on the strategies outlined in Section~\ref{sec:confsets} to develop more efficient algorithms for cases where the $P$-value is neither a monotone nor a weakly unimodal function of the parameter. Additionally, further research is required to address strategies for the multidimensional case. An appropriate choice of test statistic may yield a $P$-value that is quasiconcave in the parameter, but this remains an open question requiring additional investigation.

\paragraph*{Acknowledgments.}
This work was supported in part by NSF Grants DGE~2243822 and SaTC~2228884.
We are grateful to Benjamin Recht for helpful conversations
and to P.M.\ Aronow for comments on an earlier draft.

\bibliography{bib.bib}

\end{document}